%% file: arxiv.tex
\documentclass[aps, prl, reprint, letterpaper, twocolumn, notitlepage, superscriptaddress, bibnotes]{revtex4-2}

\usepackage[charter, cal=cmcal, sfscaled=false]{mathdesign}
\usepackage[hidelinks, colorlinks=true, citecolor=blue, urlcolor=., linkcolor=.]{hyperref}
\usepackage{graphicx, braket, mathtools, appendix, amsmath, listings, xcolor, printlen, tikz, amsthm, ragged2e, comment, upgreek}
\usepackage[shortlabels]{enumitem} 
\usepackage[parfill]{parskip} 
\usepackage[mode=buildnew]{standalone} 

\usetikzlibrary{decorations.pathreplacing} 

\newtheorem{lemma}{Lemma} 
 
\newtheorem{theorem}{Theorem}

\DeclareMathOperator{\tr}{tr} 

\justifying
\begin{document}
		
\title{The equivalence of quantum deletion and insertion errors on permutation-invariant codes}

\author{Lewis Bulled}
\email{lbulled1@sheffield.ac.uk}

\author{Yingkai Ouyang}
\email{y.ouyang@sheffield.ac.uk}

\affiliation{School of Mathematical and Physical Sciences, University of Sheffield, Sheffield, S3 7RH, United Kingdom}

\date{\today}

\begin{abstract}
    \noindent Quantum synchronisation errors are a class of quantum errors that change the number of qubits in a quantum system. The classical error correction of synchronisation errors has been well-studied, including an insertion-deletion equivalence more than a half-century ago, but little progress has been made towards the quantum counterpart since the birth of quantum error correction. We address the longstanding problem of a quantum insertion-deletion equivalence on permutation-invariant codes, detailing the conditions under which such codes are $t$-insertion error-correctable. We extend these conditions to quantum insdel errors, formulating a more restrictive set of conditions under which permutation-invariant codes are $(t,s)$-insdel error-correctable. Our work resolves many of the outstanding questions regarding the quantum error correction of synchronisation errors.
\end{abstract}

\maketitle

\input{main}


\onecolumngrid
\section*{Appendix}
\label{Appendix}
\input{supp}

\twocolumngrid
\bibliography{ref}

\end{document}

%% file: main.tex
\section{Introduction}
\label{Sec:Introduction}

\noindent 
It is well-understood that quantum error correction (QEC) is fundamental to the success of future quantum technologies. It is thus no surprise that QEC literature is expansive, and a variety of quantum errors have been studied over the last three decades. A class of errors that has been historically under-researched, though, is synchronisation errors. Such errors change the number of qubits in a quantum system, including erasure, deletion and insertion errors. Erasure errors delete qubits in known locations, whereas deletion (insertion) errors delete (insert) qubits randomly according to some underlying probability distribution.

\noindent Synchronisation errors are prevalent in quantum communication, specifically optical systems. When states are transmitted across a lossy quantum channel, environmental noise may lead to qubit loss in the form of erasure or deletion errors~\cite{Ouyang_2021}. On the other hand, insertion errors may occur as a result of hardware imperfections, and it is theoretically possible for a combination of both error types to occur, that is, insertion-deletion (``insdel'') errors. 

\noindent Sixty years ago, Levenshtein demonstrated the equivalence between classical deletions and insertions~\cite{Levenshtein_1966}. In particular, Levenshtein showed that a $t$-deletion error-correctable code is a $t$-insertion error-correctable code, and vice versa. Research on classical insdel codes is reasonably mature~\cite{Schulman_Zuckerman_1997, Haeupler_2017, Duc_Liu_2021}, largely due to its applications in DNA storage and racetrack memories~\cite{Chee_2017, Buschmann_2013}, and progress is still being made at present~\cite{Segrest_2025}. Although preliminary attempts at quantum insdel channels have been made~\cite{Leahy_2019}, until now research on quantum insdel errors has been rather limited.

\noindent Research on quantum deletion codes has gained significant momentum over the past several years. Nakayama and Hagiwara presented the first quantum deletion code -- an eight-qubit code capable of correcting a single deletion error~\cite{Nakayama_Hagiwara_Dec2019} -- and subsequently reduced the code length to four qubits~\cite{Hagiwara_Nakayama_Jan2020}. The authors argued that the optimal code length is four qubits, that is, there exists no quantum code on two or three qubits capable of correcting a single deletion error, and later provided a general construction of such codes~\cite{Hagiwara_Nakayama_Apr2020}. Shibayama and Hagiwara introduced a method for constructing permutation-invariant (PI) codes capable of correcting multiple deletions~\cite{Shibayama_Hagiwara_2021}. Ouyang noted an equivalence between erasure and deletion errors on PI codes~\cite{Ouyang_2021}, and proposed that such codes must have distance at least $t+1$ to correct $t$ deletions. Aydin \emph{et al.} proposed an infinite family of PI codes~\cite{Aydin_2024}, so-called combinatorial or AAB codes, which can correct $2t$ deletion errors (and thus all $t$-qubit Pauli errors.) Ouyang and Brennen introduced a novel QEC framework based on projective measurements of angular momentum~\cite{Ouyang_Brennen_2025}, providing a syndrome-extraction protocol that corrects all Pauli and deletion errors, and Bulled and Ouyang later extended this to insertion errors~\cite{Bulled_2025}.

\noindent Still, there remains a notable gap in the literature on quantum insertion codes. Hagiwara discovered the first quantum insertion code -- showing that the four-qubit deletion code also functions as an insertion code~\cite{Hagiwara_2021} -- and Shibayama and Hagiwara later provided a family of codes capable of correcting both single insertions and deletions~\cite{Shibayama_Hagiwara_2022}. Motivated by Levenshtein’s classical equivalence~\cite{Levenshtein_1966}, much of the recent work has focused on establishing a quantum analogue. Shibayama and Ouyang proved the equivalence of deletions and separable insertions~\cite{Shibayama_Ouyang_2021}, and later Shibayama reformulated the Knill–Laflamme QEC criterion~\cite{Knill_Laflamme_1996} for arbitrary single insertions~\cite{Shibayama_2025} (though multiple insertions were not considered.)

\noindent In this paper, we first prove the longstanding quantum insertion-deletion equivalence on PI codes. By utilising combinatorial techniques, we demonstrate that our $t$-insertion conditions are equivalent to the $2t$-deletion conditions of Aydin \emph{et al.}~\cite[Thm. 4.1]{Aydin_2024}, which suffices to prove the equivalence. We then extend these ideas to quantum insdel errors, formulating the conditions under which PI codes are both $t$-insertion \emph{and} $s$-deletion error-correctable. 


\section{Permutation-invariant quantum codes}
\label{Sec:PI codes}

\noindent A class of quantum codes that is particularly well-suited for the QEC of synchronisation errors is PI codes. Such codes were introduced by Ruskai in 2000~\cite{Ruskai_2000}, and the literature has since matured greatly, cf.~\cite{Pollatsek_Ruskai_2004, Ouyang_2014, Ouyang_Fitzsimons_2016, Ouyang_2017, Hagiwara_Nakayama_Jan2020, Aydin_2024, Ouyang_Jing_Brennen_2025}. PI codes are defined within the symmetric space, i.e. the subspace of a given Hilbert space such that states remain invariant under any permutation of the underlying qubits. The $N$-qubit symmetric space is spanned by Dicke states
\begin{equation}
    \label{Dicke state}
    \ket{D^N_k} = \frac{1}{\sqrt{\binom{N}{k}}} \sum_{\substack{\mathbf{x} \in \{0,1\}^N, \\ |\mathbf{x}| = k}} \ket{\mathbf{x}},
\end{equation}
where $\binom{N}{k}$ denotes the binomial coefficient. In this paper, we consider a PI code with logical codewords
\begin{equation}
    \label{PI logical codewords}
    \ket{0_L} = \sum_{k=0}^N \alpha_k \ket{D^N_k}, \qquad \ket{1_L} = \sum_{k=0}^N \beta_k \ket{D^N_k}.
\end{equation}
We define \emph{logical coefficients} $\vec{\alpha} \coloneqq (\alpha_0, \ldots, \alpha_N)$ and $\vec{\beta} \coloneqq (\beta_0, \ldots, \beta_N)$, where $\alpha_k, \beta_k \in \mathbb{C}$ are such that $\vec{\alpha}, \vec{\beta}$ are orthonormal:
\begin{equation}
    \label{Orthonormality}
    \sum_{k=0}^N \alpha^*_k \beta_k = 0, \qquad \sum_{k=0}^N |\alpha_k|^2 = \sum_{k=0}^N |\beta_k|^2 = 1.
\end{equation}
We then encode a single qubit $c_0 \ket{0} + c_1 \ket{1}$ into an $N$-qubit logical state $\ket{\psi_N} = c_0 \ket{0_L} + c_1 \ket{1_L}$, where $c_0, c_1 \in \mathbb{C}$ are such that $|c_0|^2 + |c_1|^2 = 1$. It is often convenient to write the logical state as a convex combination of Dicke states:
\begin{equation}
    \label{Logical state}
    \ket{\psi_N} = \sum_{k=0}^N \gamma_k \ket{D^N_k},
\end{equation}
where $\gamma_k \coloneqq c_0 \alpha_k + c_1 \beta_k$.

\section{Quantum insertion errors}
\label{Sec:Insertion error QEC}

\noindent Insertion errors occur when unwanted qubits materialise on the logical state \eqref{Logical state}. We consider the $t$-qubit insertion error 
\begin{equation}
    \label{t-qubit insertion}
    \ket{\phi_{\vec{v}}} = \sum_{x_1, \ldots, x_t \in \{0,1\}} v_{x_1 \cdots x_t} \ket{x_1 \cdots x_t},
\end{equation}
where $\sum_{x_1, \ldots, x_t} |v_{x_1 \cdots x_t}|^2 = 1$ and $\vec{v} \coloneqq (v_{x_1 \cdots x_t})_{x_1, \ldots, x_t}$ are \emph{insertion coefficients}. The \emph{insertion structure} is given by $\vec{a} \coloneqq (a_0, \ldots, a_N)$, an $(N+1)$-tuple corresponding to the insertion of $a_i$ qubits in position $i$, where $0 \leq a_i \leq t$ and $0 \leq i \leq N$. Clearly $\sum_i a_i = t$, so $\vec{a}$ is an integer partition of $t$ into $N+1$ partitions. Note that there exist $\binom{N+t}{t} \sim \mathcal{O}(N^t)$ unique insertion structures, which is exponential in $t$. Given a PI state $\uprho_N$, probability distribution $p$ and corresponding measure $\mathrm{d}\mu$, we construct the $t$-insertion channel
\begin{equation}
    \label{t-insertion channel}
    \mathcal{I}_{t,p} (\uprho_N) = \sum_{\vec{a}} \int_{\vec{v}} p(\vec{a}, \vec{v}) \, \pi_{\vec{a}} \left( \ket{\phi_{\vec{v}}} \bra{\phi_{\vec{v}}} \otimes \uprho_N \right) \pi_{\vec{a}}^{\dagger} \, \mathrm{d}\mu.
\end{equation}
Here, $\pi_{\vec{a}}$ is an $(N+t)$-qubit matrix representation of a permutation operator that permutes qubits according to the insertion structure $\vec{a}$. Since a mixed state is nothing more a convex combination of pure states of the form $\sum_{\vec{v}} z_{\vec{v}} \ket{\phi_{\vec{v}}} \bra{\phi_{\vec{v}}}$, where $z_{\vec{v}} \geq 0$ and $\sum_{\vec{v}} z_{\vec{v}} = 1$, by linearity the $t$-insertion channel \eqref{t-insertion channel} extends to the insertion of $t$-qubit mixed states. Furthermore, \eqref{t-insertion channel} admits the Kraus decomposition
\begin{equation}  
    \label{Insertion Kraus decomp}
    \mathcal{I}_{t,p} (\uprho_N) = \sum_{\vec{a}} \int_{\vec{v}} K_{\vec{a}, \vec{v}} \uprho_N K^{\dagger}_{\vec{a}, \vec{v}} \, \mathrm{d}\mu,
\end{equation}
where $K_{\vec{a}, \vec{v}} \coloneqq \sqrt{p(\vec{a}, \vec{v})} \, \pi_{\vec{a}} \, (\ket{\phi_{\vec{v}}} \otimes \mathbb{1}_N)$ is a Kraus operator. Note that $0 < p(\vec{a}, \vec{v}) < 1$ for all $\vec{a}, \vec{v}$, since $p(\vec{a}, \vec{v}) = 0$ corresponds to no insertion errors, and $p(\vec{a}, \vec{v}) = 1$ corresponds to a fixed state inserted according to a fixed insertion structure (in which case, QEC is trivial.) For a given $t$-insertion channel \eqref{t-insertion channel}, Thm. \ref{t-insertion conditions} states the conditions on $\vec{\alpha}, \vec{\beta}$ for which the PI code given by \eqref{PI logical codewords} is $t$-insertion error-correctable. \\

\begin{theorem}[$t$-insertion conditions]
    \label{t-insertion conditions}
    Let $\mathcal{C}$ be a PI code with logical coefficients $\vec{\alpha}, \vec{\beta}$. Then $\mathcal{C}$ is $t$-insertion error-correctable if and only if for all $0 \leq a', b' \leq t-j$ and $0 \leq j \leq t$, $\vec{\alpha}, \vec{\beta}$ satisfy
    \begin{align*}
        \text{(C1)} \quad &\sum_{k=0}^N \frac{\binom{N-t+j}{k}}{\sqrt{\binom{N}{k+a'}\binom{N}{k+b'}}} \, \alpha^*_{k+a'} \beta_{k+b'} = 0, \\
        \text{(C2)} \quad &\sum_{k=0}^N \frac{\binom{N-t+j}{k}}{\sqrt{\binom{N}{k+a'}\binom{N}{k+b'}}} \left( \alpha^*_{k+a'}\alpha_{k+b'} - \beta^*_{k+a'}\beta_{k+b'} \right) = 0.
    \end{align*}
\end{theorem}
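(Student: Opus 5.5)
The plan is to apply the Knill--Laflamme criterion to the Kraus operators $K_{\vec a,\vec v}$ of \eqref{Insertion Kraus decomp}. The code is $t$-insertion error-correctable if and only if, for every pair $(\vec a,\vec v)$, $(\vec b,\vec w)$,
\[
\bra{0_L}K^\dagger_{\vec a,\vec v}K_{\vec b,\vec w}\ket{1_L}=0
\quad\text{and}\quad
\bra{0_L}K^\dagger_{\vec a,\vec v}K_{\vec b,\vec w}\ket{0_L}=\bra{1_L}K^\dagger_{\vec a,\vec v}K_{\vec b,\vec w}\ket{1_L}.
\]
Since $K_{\vec a,\vec v}$ is linear in $\vec v$ up to the positive scalar $\sqrt{p}$, and $p>0$ everywhere, the linear span of the error operators equals the span of $\pi_{\vec a}(\ket{\mathbf x}\otimes\mathbb 1_N)$ over all structures $\vec a$ and computational basis strings $\mathbf x\in\{0,1\}^t$. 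The continuous family therefore causes no difficulty. It suffices to impose the conditions on the basic matrix elements
\[
M=(\bra{\mathbf x}\otimes\bra{D^N_k})\,\pi_{\vec a}^\dagger\pi_{\vec b}\,(\ket{\mathbf y}\otimes\ket{D^N_l}).
\]

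The core step is to evaluate $M$. The operator $\sigma=\pi_{\vec a}^\dagger\pi_{\vec b}$ is a permutation of $N+t$ qubits. Let $j$ be the number of inserted positions of structure $\vec b$ that $\sigma$ sends onto inserted positions of structure $\vec a$. The remaining $t-j$ inserted qubits of $\vec b$ land on code positions in the $\vec a$-frame, and correspondingly $t-j$ code qubits of the $\vec b$-frame land on inserted positions of $\vec a$.

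By permutation invariance of the Dicke states, the exact positions do not matter; only $j$ and bit weights do. I will use the Dicke splitting
\[
\ket{D^N_k}=\sum_{r}\sqrt{\tbinom{t-j}{r}\tbinom{N-t+j}{k-r}\big/\tbinom{N}{k}}\;\ket{D^{t-j}_{r}}\ket{D^{N-t+j}_{k-r}}
\]
on both sides. The $j$ overlapping inserted qubits then give a Kronecker delta between sub-strings of $\mathbf x$ and $\mathbf y$. Let $b'$ be the weight of the $t-j$ bits of $\mathbf y$ that fall into the code region, and $a'$ the weight of the $t-j$ bits of $\mathbf x$ that receive code qubits. Contracting the common $N-t+j$ code qubits forces $k-a'=l-b'=:m$. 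The result is a nonzero constant depending only on $(j,a',b')$ and the bit strings, times
\[
\frac{\binom{N-t+j}{m}}{\sqrt{\binom{N}{m+a'}\binom{N}{m+b'}}}.
\]
Summing against $\alpha^*_k\beta_l$ (respectively $\alpha^*_k\alpha_l-\beta^*_k\beta_l$) reproduces exactly the left-hand sides of (C1) and (C2).

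For sufficiency, every basic matrix element is a constant multiple of an expression from (C1) or (C2) for some admissible $(j,a',b')$ with $0\le a',b'\le t-j$. The Knill--Laflamme conditions then follow by linearity. For necessity, for each $0\le j\le t$ and each $0\le a',b'\le t-j$ I will exhibit structures $\vec a,\vec b$ whose relative permutation has overlap exactly $j$. For example, put all $t$ insertions of $\vec a$ at position $0$, and put $j$ of $\vec b$'s insertions at position $0$ and the remaining $t-j$ after enough code qubits, assuming $N\ge t$. I will also choose strings $\mathbf x,\mathbf y$ whose relevant segments have weights $a',b'$ and whose overlapping segments agree. The corresponding prefactor is nonzero, so the Knill--Laflamme conditions force each (C1) and (C2) sum to vanish.

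The main obstacle I expect is the bookkeeping in evaluating $M$. Two things must be checked carefully: that the overlap parameter $j$ is well defined for every pair of structures, and that the resulting prefactor factorises with no residual dependence on positions beyond $(j,a',b')$. I would handle this by first applying a permutation that commutes with the symmetric factor, so as to bring $\sigma$ to a canonical block form, and then doing the Dicke contraction.
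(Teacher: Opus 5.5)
Your proposal is correct and follows essentially the same route as the paper: the Knill--Laflamme criterion on the Kraus operators, Vandermonde splitting of $\ket{D^N_k}$ around an overlap-$j$ arrangement of the inserted qubits, and Kronecker-delta contractions producing the $\binom{N-t+j}{m}$ factor. Your reduction by linearity to computational-basis strings, together with explicit overlap-$j$ witnesses (assuming $N\ge t$), makes the necessity direction cleaner than the paper's ``coefficient is non-zero'' step, which is loose because there $a',b'$ depend on the summed bits. One harmless slip: with your definitions of $a'$ and $b'$, the contraction actually forces $k-b'=l-a'$, not $k-a'=l-b'$. This only interchanges $a'$ and $b'$ in (C1)/(C2), and that is immaterial since both range over $[0,t-j]$.
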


\noindent Conditions (C1), (C2) are equivalent to the $(t-j)$-deletion conditions of Aydin et. al. \cite[Thm. 4.1]{Aydin_2024} for all $0 \leq j \leq t$. Thm. \ref{t-insertion conditions} thus demonstrates the equivalence of quantum deletion and insertion errors on PI codes; that is, if a PI code is $t$-deletion error-correctable, then it is also $t$-insertion error-correctable (and vice versa.) Consequently, all deletion PI codes of distance at least $t+1$, such as gnu codes~\cite{Ouyang_2014}, combinatorial codes~\cite{Aydin_2024} and $(b,g,m)$-codes~\cite{Ouyang_Jing_Brennen_2025}, are automatically insertion codes. We conclude this section with a proof of Thm. \ref{t-insertion conditions}.

\begin{proof}
    We show that conditions (C1), (C2) are equivalent to the Knill-Laflamme error-correction criterion \cite{Knill_Laflamme_1996} for the $t$-insertion channel \eqref{t-insertion channel}. Explictly, the action of the Kraus operator on the logical state is given by
    \begin{equation}
        K_{\vec{a}, \vec{v}} \ket{\psi_N} = \sqrt{p(\vec{a}, \vec{v})} \sum_{k, x_1, \ldots, x_t} \gamma_k v_{x_1 \cdots x_t} \pi_{\vec{a}} \left( \ket{x_1 \cdots x_t} \ket{D^N_k} \right).
    \end{equation}
    By the Vandermonde decomposition, we can write
    \begin{equation}
        \ket{D^N_k} = \sum_{a'=0}^{t-j} \sqrt{\frac{\binom{t-j}{a'} \binom{N-t+j}{k-a'}}{\binom{N}{k}}} \ket{D^{t-j}_{a'}} \ket{D^{N-t+j}_{k-a'}}.
    \end{equation}
    Thus given probability distributions $p, p'$, there exists an $(N+t)$-qubit matrix representation $\sigma$ of a permutation operator such that
    \begin{multline}
        \sigma \, K_{\vec{a}, \vec{v}} \ket{0_L} = \sqrt{p(\vec{a}, \vec{v})} \sum_{\substack{x_1, \ldots, x_t, \\ k, a'}} \sqrt{\frac{\binom{t-j}{a'} \binom{N-t+j}{k-a'}}{\binom{N}{k}}} \, \alpha_k v_{x_1 \cdots x_t} \\ \ket{x_1 \cdots x_{t-j}} \ket{x_{t-j+1} \cdots x_t} \ket{D^{t-j}_{a'}} \ket{D^{N-t+j}_{k-a'}},
    \end{multline}
    \vspace{-0.6cm}
    \begin{multline}
        \sigma \, K_{\vec{b}, \vec{u}} \ket{1_L} = \sqrt{p'(\vec{b}, \vec{u})} \sum_{\substack{x_1', \ldots, x_t', \\ k', b'}} \sqrt{\frac{\binom{t-j}{b'} \binom{N-t+j}{k'-b'}}{\binom{N}{k'}}} \, \beta_{k'} u_{x_1' \cdots x_t'} \\ \ket{D^{t-j}_{b'}} \ket{x_1' \cdots x_j'} \ket{x_{j+1}' \cdots x_t'} \ket{D^{N-t+j}_{k'-b'}}.
    \end{multline}
    See Fig. \ref{Fig:Permuted post-insertion state} for a visual depiction and further detail. Note that $\sigma$ permutes the $t$ inserted qubits, but we can relabel these without any loss of generality. The Knill-Laflamme orthogonality condition $\bra{0_L} K^{\dagger}_{\vec{a}, \vec{v}} K_{\vec{b}, \vec{u}} \ket{1_L} = 0$ \cite{Knill_Laflamme_1996} yields
    \begin{equation}
        \label{KL orthogonality}
        \sum_{\substack{x_1, x_1', \ldots, x_t, x_t', \\ k, k', a', b'}} \Delta \sqrt{\frac{\binom{t-j}{a'} \binom{t-j}{b'} \binom{N-t+j}{k-a'} \binom{N-t+j}{k'-b'}}{\binom{N}{k} \binom{N}{k'}}} \, \alpha^*_k \beta_{k'} v^*_{x_1 \cdots x_t} u_{x_1' \cdots x_t'} = 0,
        \end{equation}
    where $\Delta$ defines the following product of inner products:
    \begin{multline}
        \label{Inner products}
        \Delta \coloneqq \braket{x_1 \cdots x_{t-j} | D^{t-j}_{b'}} \braket{D^{t-j}_{a'} | x_{j+1}' \cdots x_t'} \\ \braket{D^{N-t+j}_{k-a'} | D^{N-t+j}_{k'-b'}} \braket{x_{t-j+1} \cdots x_t | x_1' \cdots x_j'}.
    \end{multline}
    One can verify that \eqref{Inner products} simplifies to
    \begin{equation}
        \label{Delta functions}
        \Delta = \frac{\delta_{k', k+b'-a'} \delta_{x_{t-j+1}, x_1'} \cdots \delta_{x_t, x'_j}}{\sqrt{\binom{t-j}{a'} \binom{t-j}{b'}}},
    \end{equation}
    where $b' = \sum_{i=1}^{t-j} x_i$ and $a' = \sum_{i=j+1}^t x_i'$ necessarily. By applying the Kronecker deltas in \eqref{Delta functions} and rearranging, \eqref{KL orthogonality} simplifies to
    \begin{equation}
        \sum_{x_1, x_1', \ldots, x_t, x_t'} v^*_{x_1 \cdots x_t} u_{x_1' \cdots x_t'} \sum_k \frac{\binom{N-t+j}{k-a'}}{\sqrt{\binom{N}{k} \binom{N}{k+b'-a'}}} \, \alpha^*_k \beta_{k+b'-a'} = 0,
    \end{equation}
    where we view the outer sum over $x_1, x_1', \ldots, x_t, x_t'$ as a coefficient of the inner sum over $k$. If this coefficient is zero, then the orthogonality condition is trivially satisfied and there is nothing more to show. On the other hand, if it is non-zero then
    \begin{equation}
        \sum_k \frac{\binom{N-t+j}{k-a'}}{\sqrt{\binom{N}{k} \binom{N}{k+b'-a'}}} \, \alpha^*_k \beta_{k+b'-a'} = 0.
    \end{equation}
    Finally, reindexing $k \rightarrow k+a'$ yields condition (C1):
    \begin{equation}
        \sum_k \frac{\binom{N-t+j}{k}}{\sqrt{\binom{N}{k+a'} \binom{N}{k+b'}}} \, \alpha^*_{k+a'} \beta_{k+b'} = 0.
    \end{equation}
    In a similar fashion, from the Knill-Laflamme non-deformation condition $\bra{0_L} K^{\dagger}_{\vec{a}, \vec{v}} K_{\vec{b}, \vec{u}} \ket{0_L} = \bra{1_L} K^{\dagger}_{\vec{a}, \vec{v}} K_{\vec{b}, \vec{u}} \ket{1_L}$ \cite{Knill_Laflamme_1996} we obtain condition (C2):
    \begin{equation}
        \sum_k \frac{\binom{N-t+j}{k}}{\sqrt{\binom{N}{k+a'} \binom{N}{k+b'}}} \left( \alpha^*_{k+a'} \alpha_{k+b'} - \beta^*_{k+a'} \beta_{k+b'} \right) = 0.
    \end{equation}
\end{proof}

\begin{figure}
    \centering
    \includegraphics[scale=0.74]{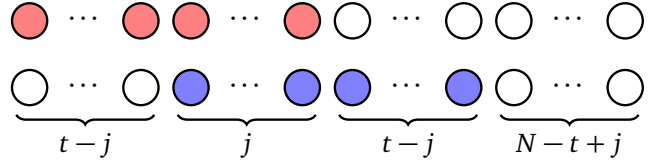}
    \caption[Permuted post-insertion state]{Respective action of the permuted Kraus operators $\sigma K_{\vec{a}, \vec{v}}, \sigma K_{\vec{b}, \vec{u}}$ on the logical codewords $\ket{0_L}, \ket{1_L}$. The permutation $\sigma \coloneqq \sigma_3 \circ \sigma_2 \circ \sigma_1$ is such that the qubits overlap by $j$, where $0 \leq j \leq t$, with the qubit count underbraced. Here, $\sigma_1$ acts non-trivially on the full $(N+t)$-qubit register and arranges the $t$ inserted qubits on the top row leftwards; $\sigma_2$ then acts non-trivially on the first $t$ qubits and arranges $j$ inserted qubits on the bottom row rightwards; finally, $\sigma_3$ acts non-trivially on the last $N$ qubits and arranges the remaining $t-j$ inserted qubits on the bottom row leftwards. The top row depicts $\sigma \, K_{\vec{a}, \vec{v}} \ket{0_L}$, and likewise the bottom row $\sigma \, K_{\vec{b}, \vec{u}} \ket{1_L}$.}
    \label{Fig:Permuted post-insertion state}
\end{figure}


\section{Quantum insdel errors}
\label{Sec:Insdel error QEC}

\noindent We now consider insdel errors, where both deletion and insertion errors occur on the logical state \eqref{Logical state}. We first define a \emph{deletion set} $d_N \coloneqq \{ i \subset [N]: |i|=s \}$, where $[N] = \{1,\dots,N\}$, of which there exist $\binom{N}{s}$ such sets. Then given a probability distribution $q$, we introduce the $s$-deletion channel~\cite{Shibayama_Hagiwara_2021, Aydin_2024}
\begin{equation}
    \label{s-deletion channel}
    \mathcal{D}_{s,q} (\uprho_N) = \sum_{d \in d_N} q(d) \tr_d (\uprho_N),
\end{equation}
where $\tr_d$ denotes the partial trace over all $d \in d_N$. As for insertions, note that $0 < q(d) < 1$ for all $d$; $q(d) = 0$ corresponds to no deletions and $q(d) = 1$ corresponds to deletions in known locations, i.e. erasure errors. Furthermore, for all PI states $\uprho_N$, \eqref{s-deletion channel} reduces to the uniform deletion channel $\mathcal{D}_{s,\mathcal{U}}$, where $q(d) = \binom{N}{s}^{-1}$ for all $d \in d_N$ and $\mathcal{U}$ denotes the discrete uniform distribution. For PI states, $\mathcal{D}_{s,\mathcal{U}}$ coincides with the partial trace over $s$ qubits. \\

\noindent By composing the channels \eqref{t-insertion channel} and \eqref{s-deletion channel}, we obtain an insdel channel $\mathcal{I}_{t,p} \circ \mathcal{D}_{s,\mathcal{U}}$ that represents $s$ deletions on a PI state followed by $t$ insertions. Physically, this describes only the deletion of qubits on the logical state (since insertions occur second), and we call such errors \emph{semi-insdel errors}. In Lem. \ref{Deletion-insertion conditions}, we show that semi-insdel errors follow similar error-correction conditions to those of $t$-insertion errors. \\

\begin{lemma}[Semi-insdel conditions]
    \label{Deletion-insertion conditions}
    Let $\mathcal{C}$ be a PI code with logical coefficients $\vec{\alpha}, \vec{\beta}$. Then $\mathcal{C}$ is semi-insdel error-correctable if and only if for all $0 \leq a', b' \leq t+s-j$ and $0 \leq j \leq t$, $\vec{\alpha}, \vec{\beta}$ satisfy
    \begin{align*}
        \text{(C3)} \quad &\sum_{k=0}^N \frac{\binom{N-t-s+j}{k}}{\sqrt{\binom{N}{k+a'}\binom{N}{k+b'}}} \, \alpha^*_{k+a'} \beta_{k+b'} = 0, \\
        \text{(C4)} \quad &\sum_{k=0}^N \frac{\binom{N-t-s+j}{k}}{\sqrt{\binom{N}{k+a'}\binom{N}{k+b'}}} \left( \alpha^*_{k+a'}\alpha_{k+b'} - \beta^*_{k+a'}\beta_{k+b'} \right) = 0.
    \end{align*}
\end{lemma}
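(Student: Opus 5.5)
The plan is to reduce Lem.~\ref{Deletion-insertion conditions} to the mechanism used in the proof of Thm.~\ref{t-insertion conditions}. The idea is to treat $s$ deletions followed by $t$ insertions as a $t$-insertion channel acting on the reduced $(N-s)$-qubit PI state. The proof has three steps: first, write down the Kraus operators of $\mathcal{I}_{t,p}\circ\mathcal{D}_{s,\mathcal{U}}$; second, rerun the Knill--Laflamme computation with the partial trace included; third, reindex.

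For the Kraus operators, note that for PI states $\mathcal{D}_{s,\mathcal{U}}$ is just the partial trace over the first $s$ qubits. Its Kraus operators are therefore $\bra{\mathbf{y}}\otimes\mathbb{1}_{N-s}$ for $\mathbf{y}\in\{0,1\}^s$. The composite channel then has Kraus operators
\begin{equation*}
K_{\vec a,\vec v,\mathbf{y}}=\sqrt{p(\vec a,\vec v)}\,\pi_{\vec a}\left(\ket{\phi_{\vec v}}\otimes\bra{\mathbf{y}}\otimes\mathbb{1}_{N-s}\right),
\end{equation*}
where $\vec a$ is now an insertion structure on $N-s$ qubits. To compute their action on $\ket{0_L},\ket{1_L}$, I would apply the Vandermonde decomposition $\ket{D^N_k}=\sum_{c}\sqrt{\binom{s}{c}\binom{N-s}{k-c}/\binom{N}{k}}\,\ket{D^s_c}\ket{D^{N-s}_{k-c}}$. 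Contracting with $\bra{\mathbf{y}}$ leaves a Dicke state on $N-s$ qubits whose weight is shifted by $|\mathbf{y}|$.

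The Knill--Laflamme step then repeats the argument of Thm.~\ref{t-insertion conditions} on the $N-s$ surviving qubits. As in Fig.~\ref{Fig:Permuted post-insertion state}, I would choose a permutation $\sigma$ that aligns the inserted qubits of the two branches with overlap $j$. This splits the $N-s$ survivors into a block of $t-j$ qubits meeting inserted qubits of the other branch and a common block of $N-s-t+j$ qubits. I would then decompose each $\ket{D^{N-s}_{k-c}}$ again by Vandermonde over these blocks.

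In the inner products, the $\bra{\mathbf{y}}$, $\bra{\mathbf{y}'}$ parts together with the $t-j$ mismatched inserted/code qubits in each branch combine into a string of $t+s-j$ qubits. These qubits are \emph{not} shared between the two branches, so a Kronecker delta survives only on the common $(N-t-s+j)$-qubit block, forcing $k'=k+b'-a'$. Here $a'$ and $b'$ are the total Hamming weights absorbed by these $t+s-j$ positions, so $0\le a',b'\le t+s-j$.

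The main obstacle is the binomial bookkeeping. The nested Vandermonde factors must collapse to $\binom{N-t-s+j}{k-a'}/\sqrt{\binom{N}{k}\binom{N}{k'}}$, with the intermediate $\binom{N-s}{\cdot}$, $\binom{s}{\cdot}$ and $\binom{t-j}{\cdot}$ factors cancelling. I would try to avoid explicit cancellation by using associativity of the Vandermonde decomposition. Splitting $N$ directly into blocks of sizes $s$, $t-j$ and $N-t-s+j$ gives coefficients $\sqrt{\binom{s}{c}\binom{t-j}{e}\binom{N-t-s+j}{k-c-e}/\binom{N}{k}}$. Summing over the split $c+e=a'$ weighted by the projections onto computational strings then absorbs the $\binom{s}{c}\binom{t-j}{e}$ factors, exactly as the $\binom{t-j}{a'}$ factors cancelled in \eqref{Delta functions}.

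After factoring out the coefficient depending on $\vec v,\vec u,\mathbf{y},\mathbf{y}'$, reindexing $k\to k+a'$ gives (C3) from the orthogonality condition and (C4) from non-deformation. For the converse, I would argue that every pair $(a',b')$ in range and every $j\in\{0,\dots,t\}$ arises from some choice of $\vec a,\vec b,\vec v,\vec u,\mathbf{y},\mathbf{y}'$ with nonzero prefactor, using that $p>0$ everywhere. Hence the Knill--Laflamme criterion is equivalent to (C3) and (C4).
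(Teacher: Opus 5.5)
Your proposal is correct and follows essentially the same route as the paper's proof. The paper uses the weight-indexed Kraus operators $\tilde K_a$ of $\mathcal{D}_{s,\mathcal{U}}$, which on the symmetric subspace act as $\sqrt{\binom{s}{a}}\,(\bra{\mathbf{y}}\otimes\mathbb{1}_{N-s})$ for any $|\mathbf{y}|=a$, in place of your computational-basis $\bra{\mathbf{y}}$; it then Vandermonde-splits $\ket{D^{N-s}_{k-a}}$ into $t-j$ and $N-s-t+j$ blocks and sets $a'=a+a''$, $b'=b+b''$, which is exactly your merged $(t+s-j)$-qubit block. For the converse, state explicitly that taking $\ket{\phi_{\vec v}},\ket{\phi_{\vec u}}$ to be computational basis states (allowed since $p>0$) collapses the Knill--Laflamme sum to a single $(a',b')$ term, since a nonzero prefactor alone does not isolate each condition.
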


\noindent We omit the proof of Lem. \ref{Deletion-insertion conditions} for brevity; see the Appendix for full details. Now consider the insdel channel $\mathcal{D}_{s,q} \circ \mathcal{I}_{t,p}$, which represents $t$ insertions followed by $s$ deletions. In this case, both inserted qubits and qubits on the logical state may be deleted (since deletions occur second), and we call such errors \emph{full-insdel errors}. In Lem. \ref{Channel swap}, we prove a crucial result regarding the commutativity of insertion and deletion channels. \\

\begin{lemma}
    \label{Channel swap}
    Let $\mathcal{I}_{t,p}$ be the $t$-insertion channel in \eqref{t-insertion channel} and $\mathcal{D}_{s,q}$ be the $s$-deletion channel in \eqref{s-deletion channel}. Then for all PI states $\uprho_N$, there exist probability distributions $r$ and $\tilde w$ such that
    \begin{equation}
        \label{Equivalence of channels}
        \mathcal{D}_{s,q} \circ \mathcal{I}_{t,p} (\uprho_N) = \sum_{\ell=0}^{\mathrm{min}(s,t)} r(\ell) \,\, \mathcal{I}_{t-\ell, \tilde w} \circ \mathcal{D}_{s-\ell, \mathcal{U}} \, (\uprho_N).
    \end{equation}
\end{lemma}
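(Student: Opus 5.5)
We decompose the composite channel term by term and sort by how many deleted qubits are inserted ones. Fix an insertion structure $\vec{a}$, insertion state $\ket{\phi_{\vec v}}$, and deletion set $d\in d_N$ (now a subset of the $N+t$ positions of the post-insertion register). Let $\ell = |d \cap I_{\vec a}|$, where $I_{\vec a}$ is the set of $t$ positions occupied by inserted qubits under $\pi_{\vec a}$. Then $0\le \ell\le \min(s,t)$, and exactly $s-\ell$ deleted qubits belong to the original $N$-qubit register. The idea is that $\tr_d$ factorises across the inserted and original registers. Tracing $\ell$ inserted qubits out of $\ket{\phi_{\vec v}}\bra{\phi_{\vec v}}$ leaves a $(t-\ell)$-qubit (generally mixed) state $\tau$. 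Tracing $s-\ell$ qubits out of $\uprho_N$ leaves $\tr_{d'}(\uprho_N)$, where $d'$ is the set of original positions hit by $d$.

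The key steps, in order, are as follows.

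\textbf{Step 1 (commuting the traces).} Show that $\tr_d\bigl(\pi_{\vec a}(\ket{\phi_{\vec v}}\bra{\phi_{\vec v}}\otimes\uprho_N)\pi_{\vec a}^\dagger\bigr) = \pi_{\vec a'}\bigl(\tau\otimes \tr_{d'}(\uprho_N)\bigr)\pi_{\vec a'}^\dagger$. Here $\vec a'$ is the induced insertion structure of $t-\ell$ qubits into $N-s+\ell$ positions, obtained by deleting the removed slots and merging adjacent gaps. This is a bookkeeping statement about partial traces and permutations. Its only content is that removing qubits preserves the relative order of the survivors.

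\textbf{Step 2 (using permutation invariance).} Since $\uprho_N$ is PI, $\tr_{d'}(\uprho_N) = \tr_{s-\ell}(\uprho_N) = \mathcal{D}_{s-\ell,\mathcal U}(\uprho_N)$, independent of which original qubits were deleted. Hence the dependence on $d'$ disappears, and only $\ell$, $\vec a'$ and $\tau$ remain.

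\textbf{Step 3 (regrouping the probabilities).} Define $r(\ell)=\sum_{\vec a}\int_{\vec v}\sum_{d:\,|d\cap I_{\vec a}|=\ell} p(\vec a,\vec v)q(d)\,\mathrm d\mu$. For $r(\ell)>0$, define $\tilde w(\vec a',\tau)$ on $(t-\ell)$-insertion data as the conditional distribution of $(\vec a',\tau)$ given $\ell$. The mixed inserted states $\tau$ are handled by the linearity remark following \eqref{t-insertion channel}. To do this, decompose $\tau$ into pure states and absorb the weights into $\tilde w$. Summing over $\ell$ then gives \eqref{Equivalence of channels}. Strictly, $\tilde w=\tilde w_\ell$ depends on $\ell$, and the statement's single symbol should be read that way.

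The main obstacle is Step 1, together with the measure-theoretic care needed in Step 3. Pushing the measure $p\,\mathrm d\mu$ forward through the map $(\vec a,\vec v,d)\mapsto(\vec a',\tau)$ must give a legitimate distribution over $(t-\ell)$-qubit insertions that matches the form of \eqref{t-insertion channel}. Mixed $\tau$ have to be rewritten as integrals over pure $\ket{\phi}$. I would handle this with the spectral decomposition of each $\tau$, and I would check that the strict bounds $0<\tilde w<1$ are not actually needed for the identity.
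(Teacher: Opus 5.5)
Your proposal is correct and follows essentially the same route as the paper. You classify each $(\vec a,d)$ by the number $\ell$ of deleted inserted qubits, commute $\tr_d$ past $\pi_{\vec a}$ to obtain the reduced structure $\vec a'$, use permutation invariance to replace the trace on the code register by $\mathcal{D}_{s-\ell,\mathcal U}(\uprho_N)$, and renormalise to get $r(\ell)$ and $\tilde w$. Your $r(\ell)$ agrees with the paper's $\sum_{\vec a'\in A_\ell}\tr(\mathcal{Z}_{\ell,\vec a'})$ because every summand has unit trace, and your pushforward bookkeeping is slightly more careful than the paper's (letting $\tau$ depend on which inserted qubits are deleted, letting $\tilde w$ depend on $\ell$, and excluding $r(\ell)=0$).
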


\begin{proof}
    We show that the composition of insertion and deletion channels decomposes according to the number of inserted qubits that are deleted, which allows for a commutation between the resulting channels. We first apply the $s$-deletion channel \eqref{s-deletion channel} to \eqref{t-insertion channel}, which yields
    \begin{multline}
        \label{Insdel channel}
        \mathcal{D}_{s,q} \circ \mathcal{I}_{t,p} (\uprho_N) = \sum_{d \in d_{N+t}} \sum_{\vec{a}} \int_{\vec{v}} p(\vec{a}, \vec{v}) \, q(d) \\ 
        \tr_d \left( \pi_{\vec{a}} \left( \ket{\phi_{\vec{v}}} \bra{\phi_{\vec{v}}} \otimes \uprho_N \right) \pi_{\vec{a}}^{\dagger} \right) \, \mathrm{d}\mu.
    \end{multline}
    For a quantum state $\sigma$, we introduce the channel $\mathcal{Q}_{\vec{a},d} (\sigma) \coloneqq \tr_d \big( \pi_{\vec{a}} \sigma \pi^\dagger_{\vec{a}} \big)$. We now define $w(\vec{a},d,\vec{v}) \coloneqq p(\vec{a}, \vec{v}) \, q(d)$ so that \eqref{Insdel channel} can be written as
    \begin{multline}
        \label{Insdel channel simplified}
        \mathcal{D}_{s,q} \circ \mathcal{I}_{t,p} (\uprho_N) = \sum_{d \in d_{N+t}} \sum_{\vec{a}} \int_{\vec{v}} w(\vec{a},d,\vec{v}) \\ \mathcal{Q}_{\vec{a},d} \left( \ket{\phi_{\vec{v}}} \bra{\phi_{\vec{v}}} \otimes \uprho_N \right) \, \mathrm{d}\mu.
    \end{multline}
    For a given insertion structure $\vec{a}$, each deletion set $d_{N+t}$ induces a \emph{reduced insertion structure} $\vec{a}' \coloneqq (a_0, \ldots, a_{N-s+\ell})$ such that $\sum_i a_i = t-\ell$, where $0 \leq a_i \leq t-\ell$ and $0 \leq i \leq N-s+\ell$. Thus for all $\vec{a}, d \in d_{N+t}$, there exists a surjective map $\varphi$ such that $\varphi(\vec a, d) = (\ell, \vec a')$, where $0 \leq \ell \leq \min(s,t)$. Here $\ell$ counts the number of inserted qubits that are deleted, and thus $s-\ell$ counts the number of deleted qubits on the logical state whenever $s>\ell$. For all $0 \leq \ell \leq \min(s,t)$, we define the set of reduced insertion structures on $N-s+\ell$ qubits, $A_\ell \coloneqq \{ \vec a': (\ell,\vec a') \in \operatorname{Im}(\varphi) \}$. Thus we can rewrite \eqref{Insdel channel simplified} as
    \begin{multline}
        \label{Insdel channel post-surjection}
        \mathcal{D}_{s,q} \circ \mathcal{I}_{t,p} (\uprho_N) = \sum_{\ell=0}^{\mathrm{min}(s,t)} \sum_{\vec a' \in A_\ell} 
        \int_{\vec v} w(\ell, \vec a', \vec v) \\ 
        \mathcal{Q}_{\ell, \vec a'} \left( \ket{\phi_{\vec{v}}} \bra{\phi_{\vec{v}}} \otimes \uprho_N \right) \, \mathrm{d}\mu.
    \end{multline}
    where every $(\vec a, d)$ pair is replaced with $(\ell, \vec a')$. We now proceed to simplify \eqref{Insdel channel post-surjection}. For the channel $\mathcal{Q}_{\ell, \vec a'}$, we can ``commute'' the partial trace $\tr_d$ and permutation operator $\pi_{\vec a}$ in the sense that 
    \begin{equation}
        \label{Commute partial trace past permutation}
        \mathcal{Q}_{\ell, \vec a'} (\ket{\phi_{\vec{v}}}    \bra{\phi_{\vec{v}}} \otimes \uprho_N) = \pi_{\vec a'} \left( \tr_{d'}(\ket{\phi_{\vec{v}}} \bra{\phi_{\vec{v}}}) \otimes \tr_{[s-\ell]}(\uprho_N) \right)
        \pi_{\vec a'}^\dagger  
    \end{equation}
    for some set $d'$ such that $|d'|=\ell$. The particular choice of $d'$ is immaterial since any two choices differ only by a permutation absorbed into $\pi_{\vec a'}$. Note that $\sigma_{\ell, \vec a', \vec v} \coloneqq \tr_{d'}(\ket{\phi_{\vec{v}}} \bra{\phi_{\vec{v}}})$ is a $(t-\ell)$-qubit density matrix that, in general, depends on the reduced insertion structure $\vec a'$. Then \eqref{Commute partial trace past permutation} becomes
    \begin{align}
        \mathcal{Q}_{\ell, \vec a'} (\ket{\phi_{\vec{v}}} \bra{\phi_{\vec{v}}} \otimes \uprho_N) = \pi_{\vec a'} \left( \sigma_{\ell, \vec a', \vec v} \otimes \mathcal D_{s-\ell, \mathcal U}(\uprho_N) \right) \pi_{\vec a'}^\dagger,
    \end{align}
    where $\mathcal{D}_{s-\ell, \mathcal U} (\uprho_N) = \tr_{[s-\ell]} (\uprho_N)$ for all PI states $\uprho_N$. Thus \eqref{Insdel channel post-surjection} becomes
    \begin{multline}
        \mathcal{D}_{s,q} \circ \mathcal{I}_{t,p} (\uprho_N) = \sum_{\ell=0}^{\mathrm{min}(s,t)} \sum_{\vec a' \in A_\ell} 
        \int_{\vec v} w(\ell, \vec a', \vec v) \\ 
        \pi_{\vec a'} \left( \sigma_{\ell, \vec a', \vec v } \otimes \mathcal D_{s-\ell, \mathcal U}(\uprho_N) \right) \pi_{\vec a'}^\dagger \, \mathrm{d}\mu,
    \end{multline}
    which we rewrite as
    \begin{equation}
        \label{PSD operator}
        \mathcal{D}_{s,q} \circ \mathcal{I}_{t,p} (\uprho_N) = \sum_{\ell=0}^{\mathrm{min}(s,t)} \sum_{\vec a' \in A_\ell} \mathcal{Z}_{\ell, \vec a'} (\uprho_N)
    \end{equation}
    for a positive semidefinite operator
    \begin{equation}
        \label{PSD definiton}
        \mathcal{Z}_{\ell, \vec a'} (\uprho_N) \coloneqq \int_{\vec v} w(\ell, \vec a', \vec v) \,
        \pi_{\vec a'} \left( \sigma_{\ell, \vec a', \vec v } \otimes \mathcal D_{s-\ell, \mathcal U}(\uprho_N) \right) \pi_{\vec a'}^\dagger \, \mathrm{d}\mu.
    \end{equation}
    Note that the RHS of \eqref{PSD operator} is positive semidefinite since it is a convex combination of positive semidefinite operators. Furthermore, the trace of the LHS of \eqref{PSD operator} is one since quantum channels are trace-preserving under composition. Thus the trace of each operator $\mathcal{Z}_{\ell, \vec a'}$ must lie in the closed unit interval $[0,1]$. If we define the probability distribution
    \begin{equation}
        r(\ell) \coloneqq \sum_{\vec a' \in A_ \ell} \tr \, (\mathcal{Z}_{\ell, \vec a'}),
    \end{equation}
    then the operator $\mathcal{\tilde Z}_{\ell, \vec a'} \coloneqq \mathcal{Z}_{\ell, \vec a'} / r(\ell)$ defines a quantum channel. Substituting this back into \eqref{PSD operator} yields
    \begin{equation}
        \label{Final form 1}
        \mathcal{D}_{s,q} \circ \mathcal{I}_{t,p} (\uprho_N) = \sum_{\ell=0}^{\mathrm{min}(s,t)} r(\ell) \sum_{\vec a' \in A_\ell} \mathcal{\tilde Z}_{\ell, \vec a'} (\uprho_N),
    \end{equation}
    where, explicitly,
    \begin{equation}
        \label{Final form 2}
        \mathcal{\tilde Z}_{\ell, \vec a'} (\uprho_N) = \int_{\vec v} \tilde w(\ell, \vec a', \vec v) \, \pi_{\vec a'} \left( \sigma_{\ell, \vec a', \vec v} \otimes \mathcal D_{s-\ell, \mathcal U}(\uprho_N) \right) \pi_{\vec a'}^\dagger \, \mathrm{d}\mu.
    \end{equation}
    Note here that $\tilde w(\ell, \vec a', \vec v) \coloneqq w(\ell, \vec a', \vec v) / r(\ell)$ is a probability distribution. Comparing \eqref{Final form 2} with \eqref{t-insertion channel}, we see that
    \begin{equation}
        \sum_{\vec a' \in A_\ell} \mathcal{\tilde Z}_{\ell, \vec a'} (\uprho_N) =\mathcal{I}_{t-\ell, \tilde w} \circ \mathcal{D}_{s-\ell, \mathcal{U}} \, (\uprho_N),
    \end{equation}
    which together with \eqref{Final form 1} completes the proof.
\end{proof}

\noindent Lem. \ref{Channel swap} demonstrates that the insertion of $t$ qubits followed by the deletion of $s$ qubits is equivalent to a convex combination of the deletion of $s-\ell$ qubits followed by the insertion of $t-\ell$ qubits, where $0 \leq \ell \leq \text{min}(s,t)$. This highlights that \textbf{the only non-trivial interaction between insertions and deletions is the number of deleted inserted qubits.} In Thm. \ref{(s_1+s_2,t)-insdel conditions}, we utilise Lem. \ref{Deletion-insertion conditions} and \ref{Channel swap} to show that full-insdel errors follow similar error-correction conditions to both semi-insdel and $t$-insertion errors. \\

\begin{theorem}[Full-insdel conditions]
    \label{(s_1+s_2,t)-insdel conditions}
    Let $\mathcal{C}$ be a PI code with logical coefficients $\vec{\alpha}, \vec{\beta}$. Then $\mathcal{C}$ is full-insdel error-correctable if and only if for all $0 \leq a', b' \leq t+s-2\ell-j$, $0 \leq j \leq t-\ell$ and $0 \leq \ell \leq \mathrm{min}(s,t)$, $\vec{\alpha}, \vec{\beta}$ satisfy
    \begin{align*}
        \text{(C5)} \quad &\sum_{k=0}^N \frac{\binom{N-t-s+2\ell+j}{k}}{\sqrt{\binom{N}{k+a'}\binom{N}{k+b'}}} \, \alpha^*_{k+a'}\beta_{k+b'} = 0, \\
        \text{(C6)} \quad &\sum_{k=0}^N \frac{\binom{N-t-s+2\ell+j}{k}}{\sqrt{\binom{N}{k+a'}\binom{N}{k+b'}}} \left( \alpha^*_{k+a'}\alpha_{k+b'} - \beta_{k+a'}^*\beta_{k+b'} \right) = 0.
    \end{align*}
\end{theorem}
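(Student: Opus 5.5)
The plan is to reduce the full-insdel case to the semi-insdel case, using Lem.~\ref{Channel swap} and then Lem.~\ref{Deletion-insertion conditions}.

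First, by Lem.~\ref{Channel swap}, for every PI input state the channel $\mathcal{D}_{s,q} \circ \mathcal{I}_{t,p}$ equals $\sum_{\ell} r(\ell)\, \mathcal{I}_{t-\ell,\tilde w} \circ \mathcal{D}_{s-\ell,\mathcal U}$. Its Kraus operators are therefore the union over $\ell$ of $\sqrt{r(\ell)}$ times the Kraus operators of the semi-insdel channels with parameters $(t-\ell, s-\ell)$. Here we take $\mathcal{D}_{s-\ell,\mathcal U}$ to have Kraus operators of the form $\bra{x}_{d}$ on a fixed set of $s-\ell$ qubits, which is legitimate on the PI code space. So the full channel restricted to $\mathcal{C}$ is a single channel, and Knill--Laflamme must be imposed on all pairs of its Kraus operators.

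Second, consider pairs of Kraus operators drawn from the \emph{same} $\ell$-block. The Knill--Laflamme conditions for these pairs are exactly the semi-insdel conditions of Lem.~\ref{Deletion-insertion conditions} with $t \to t-\ell$ and $s \to s-\ell$. Substituting gives binomial top index $N-(t-\ell)-(s-\ell)+j = N-t-s+2\ell+j$, range $0\le j\le t-\ell$, and $0\le a',b'\le t+s-2\ell-j$. These are precisely (C5) and (C6). Necessity of (C5)--(C6) then follows, since $r(\ell)>0$ for each admissible $\ell$: each $\ell$ is realised with positive probability because $p,q$ are strictly positive.

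Third, the main obstacle is sufficiency. Knill--Laflamme also requires the conditions for \emph{cross} pairs $K^\dagger_{\ell}K_{\ell'}$ with $\ell\neq\ell'$. These operators map $N$ qubits to $N+t-s$ qubits in both blocks, so the output spaces agree and cross terms are genuinely present. I would handle them by redoing the computation in the proof of Thm.~\ref{t-insertion conditions} and the semi-insdel proof with asymmetric parameters. The block-$\ell$ operator deletes $s-\ell$ logical qubits and inserts $t-\ell$. After a permutation $\sigma$ aligning the surviving logical qubits, the overlap of the two surviving logical registers has size $N-s+\min(\ell,\ell')-(\text{misaligned inserted positions})$. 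The Vandermonde decomposition and Kronecker-delta evaluation of $\Delta$ then yield a sum of the same form, with binomial top index $N - t - s + \ell + \ell' + j''$ for some admissible $j''$.

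I would then show that each such cross condition coincides with one already in the list (C5)--(C6) for some $(\ell, j)$. The natural identification is $2\ell + j = \ell+\ell'+j''$ with $\ell=\min(\ell,\ell')$, and one must check that $a',b'$ stay within $t+s-2\ell-j$. If this embedding fails at the boundary of the ranges, the fallback is the monotonicity fact used for Aydin et al.'s conditions: a code satisfying the conditions for parameter $m$ satisfies them for all $m'\le m$. This follows from Pascal's rule $\binom{M}{k}=\binom{M-1}{k}+\binom{M-1}{k-1}$, which expresses a smaller-index condition as a sum of larger-deficit ones. With that fact, the cross terms are dominated by the $\ell=\min(\ell,\ell')$ conditions.

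Combining the same-block and cross-block analyses gives the Knill--Laflamme criterion for the full-insdel channel if and only if (C5) and (C6) hold.
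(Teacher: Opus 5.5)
Your proposal is correct, and its skeleton is the paper's reduction. You apply Lem.~\ref{Channel swap} to write the channel as $\sum_\ell r(\ell)\,\mathcal{I}_{t-\ell,\tilde w}\circ\mathcal{D}_{s-\ell,\mathcal U}$. You then apply Lem.~\ref{Deletion-insertion conditions} to each block with $t\to t-\ell$, $s\to s-\ell$, which is also how the paper obtains necessity.

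Where you differ is your third step. The paper's proof never examines pairs $K_\ell^\dagger K_{\ell'}$ with $\ell\neq\ell'$. It simply asserts that a convex combination of individually correctable channels is correctable, and that does not hold in general. Knill--Laflamme must be checked on all pairs of Kraus operators of the combined channel, and here all blocks share the same $(N+t-s)$-qubit output space. So your cross-term analysis is what actually justifies sufficiency.

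Your identification closes directly, without the fallback. Take $\ell\le\ell'$, and let $j''\le t-\ell'$ be the number of output positions carrying inserted qubits under both operators. The common logical register then has $N-t-s+\ell+\ell'+j''$ qubits. The shifts are bounded by the deletion weight plus the weight of the logical qubits facing the other operator's insertions: $a'\le(s-\ell)+(t-\ell'-j'')$ and $b'\le(s-\ell')+(t-\ell-j'')$. Both are at most $t+s-\ell-\ell'-j''$. Setting $j=\ell'-\ell+j''\le t-\ell$ makes each cross condition literally an instance of (C5)--(C6) with parameter $\min(\ell,\ell')=\ell$.

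Your monotonicity fallback is also valid. By Vandermonde's identity, every instance of (C5)--(C6), and every cross condition, follows from the single instance $\ell=j=0$, which is the $(t+s)$-deletion condition set. Either way, your argument is complete at the point where the paper's proof relies on an unjustified step.
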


\begin{proof}
    By Lem. \ref{Deletion-insertion conditions}, the channel $\mathcal{I}_{t-\ell, \tilde w} \circ \mathcal{D}_{s-\ell, \mathcal{U}}$ is correctable with respect to $\mathcal{C}$ if and only if $\vec{\alpha}, \vec{\beta}$ satisfy conditions (C3), (C4). It follows that a convex combination of channels of the form $\sum_\ell r(\ell) \,\, \mathcal{I}_{t-\ell, \tilde w} \circ \mathcal{D}_{s-\ell, \mathcal{U}}$ is correctable with respect to the same PI code $\mathcal{C}$, where $0 \leq \ell \leq \text{min}(s,t)$. Thus by Lem. \ref{Channel swap}, the channel $\mathcal{D}_{s,q} \circ \mathcal{I}_{t,p}$ is correctable with respect to $\mathcal{C}$ if and only if $\vec{\alpha}, \vec{\beta}$ satisfy (C3), (C4) with $(s \rightarrow s-\ell, t \rightarrow t-\ell)$. This substitution yields conditions (C5), (C6), as required.
\end{proof}

\section{Discussions \& conclusions}
\label{Sec:Discussions & conclusions}

\noindent In this paper, we proved the quantum insertion-deletion equivalence for PI codes by deriving a set of $t$-insertion conditions, and showing that these are equivalent to the $2t$-deletion conditions of Aydin \emph{et al.}~\cite[Thm. 4.1]{Aydin_2024}. We then analysed quantum insdel errors and extended our $t$-insertion conditions to a more restrictive set of $(t,s)$-insdel conditions.

\noindent There exist non-PI quantum codes~\cite{Hagiwara_2022} capable of correcting multiple deletion errors, so a natural question moving forward is whether such codes can correct multiple insertion and insdel errors. A related research avenue is whether the quantum insertion-deletion equivalence holds universally, or if there exist code families for which insertion and deletion errors are \emph{not} equivalent.

\section{Acknowledgements}
\label{Sec:Acknowledgements}

\noindent L.B. acknowledges support from EPSRC under grant number EP/W524360/1. Y.O. acknowledges support from EPSRC under grant number EP/W028115/1, and the EPSRC-funded QCI3 Hub (grant number EP/Z53318X/1.)

%% file: supp.tex
\noindent \textbf{Proof of Lem. \ref{Deletion-insertion conditions}.} The proof of Lem. \ref{Deletion-insertion conditions} is largely similar to that of Thm. \ref{t-insertion conditions}. We show that conditions (C3), (C4) are equivalent to the Knill-Laflamme error-correction criterion \cite{Knill_Laflamme_1996} for the semi-insdel channel $\mathcal{I}_{t,p} \circ \mathcal{D}_{s,\mathcal{U}}$. For all PI states $\uprho_N$, the $s$-qubit uniform deletion channel $\mathcal{D}_{s,\mathcal{U}}$ admits the Kraus decomposition 
\begin{equation}
    \label{PI Kraus decomp}
    \mathcal{D}_{s,\mathcal{U}} (\uprho_N) = \sum_{a=0}^s \tilde K_a \uprho_N \tilde K^{\dagger}_a,
\end{equation}
where for all \emph{deletion weights} $0 \leq a \leq s$ \cite[Lem. 3.6]{Aydin_2024},
\begin{equation}
    \label{Kraus operator action}
    \tilde K_a \ket{D^N_k} = \sqrt{\frac{\binom{s}{a}\binom{N-s}{k-a}}{\binom{N}{k}}} \ket{D^{N-s}_{k-a}}.
\end{equation} 
Note that \eqref{Kraus operator action} follows directly from~\cite[Lem. 4]{Ouyang_2021} and is given in~\cite[Lem. 3.6]{Aydin_2024}, though a factor of $\sqrt{\binom{s}{a}}$ is missing from the latter. By the Vandermonde decomposition, we can write
\begin{equation}
    \ket{D^{N-s}_{k-a}} = \sum_{a''=0}^{t-j} \sqrt{\frac{\binom{t-j}{a''} \binom{N-s-t+j}{k-a-a''}}{\binom{N-s}{k-a}}} \ket{D^{t-j}_{a''}} \ket{D^{N-s-t+j}_{k-a-a''}}.
\end{equation}
Thus there exists an $(N-s+t)$-qubit matrix representation $\epsilon$ of a permutation operator such that
\begin{align}
    \epsilon \left( K_{\vec{a}, \vec{v}} \circ \tilde K_a \right) \ket{0_L} &= \sqrt{\binom{s}{a}} \sum_{\substack{x_1, \ldots, x_t, \\ k, a''}} \sqrt{\frac{\binom{t-j}{a''} \binom{N-s-t+j}{k-a-a''}}{\binom{N}{k}}} \, \alpha_k v_{x_1 \cdots x_t} \ket{x_1 \cdots x_{t-j}} \ket{x_{t-j+1} \cdots x_t} \ket{D^{t-j}_{a''}} \ket{D^{N-s-t+j}_{k-a-a''}}, \\
    \epsilon \left( K_{\vec{b}, \vec{u}} \circ \tilde K_b \right) \ket{1_L} &= \sqrt{\binom{s}{b}} \sum_{\substack{x_1, \ldots, x_t, \\ k, b''}} \sqrt{\frac{\binom{t-j}{b''} \binom{N-s-t+j}{k-b-b''}}{\binom{N}{k}}} \, \beta_k u_{x_1 \cdots x_t} \ket{D^{t-j}_{b''}} \ket{x_1 \cdots x_j} \ket{x_{j+1} \cdots x_t} \ket{D^{N-s-t+j}_{k'-b-b''}}.
\end{align}
The Knill-Laflamme orthogonality condition $\bra{0_L} \big( \tilde K_a^{\dagger} \circ K_{\vec{a}, \vec{v}}^{\dagger} \big) \big( K_{\vec{b}, \vec{u}} \circ \tilde K_b \big) \ket{1_L} = 0$ \cite{Knill_Laflamme_1996} yields
\begin{equation}
    \label{KL orthogonality insdel}
    \sum_{\substack{x_1, x_1', \ldots, x_t, x_t', \\ k, k', a'', b''}} \Delta \, \sqrt{\frac{\binom{t-j}{a''} \binom{t-j}{b''} \binom{N-s-t+j}{k-a-a''} \binom{N-s-t+j}{k'-b-b''}}{\binom{N}{k} \binom{N}{k'}}} \, \alpha^*_k \beta_{k'} v^*_{x_1 \cdots x_t} u_{x_1' \cdots x_t'} = 0,
\end{equation}
where $\Delta$ defines the following product of inner products:
\begin{equation}
    \label{Inner products insdel}
    \Delta \coloneqq \braket{x_1 \cdots x_{t-j} | D^{t-j}_{b''}}  \braket{D^{t-j}_{a''} | x_{j+1}' \cdots x_t'} \braket{D^{N-s-t+j}_{k-a-a'} | D^{N-s-t+j}_{k'-b-b''}} \braket{x_{t-j+1} \cdots x_t | x_1' \cdots x_j'}.
\end{equation}
One can verify that \eqref{Inner products insdel} simplifies to
\begin{equation}
    \label{Delta functions insdel}
    \Delta = \frac{\delta_{k', k+b+b''-a-a''} \delta_{x_{t-j+1}, x_1'} \cdots \delta_{x_t, x'_j}}{\sqrt{\binom{t-j}{a''} \binom{t-j}{b''}}},
\end{equation}
where $b' = \sum_{i=1}^{t-j} x_i$ and $a' = \sum_{i=j+1}^t x_i'$. By applying the Kronecker deltas in \eqref{Delta functions insdel} and rearranging, \eqref{KL orthogonality insdel} becomes
\begin{equation}
    \sum_{x_1, x_1', \ldots, x_t, x_t'} v^*_{x_1 \cdots x_t} u_{x_1' \cdots x_t'} \sum_k \frac{\binom{N-s-t+j}{k-a-a''}}{\sqrt{\binom{N}{k} \binom{N}{k+b+b''-a-a''}}} \, \alpha^*_k \beta_{k+b+b''-a-a''} = 0,
\end{equation}
where we view outer sum over $x_1, x_1', \ldots, x_t, x_t'$ as a coefficient of the inner sum over $k$. If this coefficient is zero, then the orthogonality condition is trivially satisfied and there is nothing more to show. On the other hand, if it is non-zero then it follows that
\begin{equation}
    \sum_k \frac{\binom{N-s-t+j}{k-a-a''}}{\sqrt{\binom{N}{k} \binom{N}{k+b+b''-a-a''}}} \, \alpha^*_k \beta_{k+b+b''-a-a''} = 0.
\end{equation}
By defining $a' \coloneqq a+a''$ and $b' \coloneqq b+b''$ and reindexing $k \rightarrow k+a'$, we obtain condition (C3):
\begin{equation}
    \sum_k \frac{\binom{N-s-t+j}{k}}{\sqrt{\binom{N}{k+a'} \binom{N}{k+b'}}} \, \alpha^*_{k+a'} \beta_{k+b'} = 0.
\end{equation}
In a similar fashion, the Knill-Laflamme non-deformation condition $\bra{0_L} \big( \tilde K_a^{\dagger} \circ K_{\vec{a}, \vec{v}}^{\dagger} \big) \big( K_{\vec{b}, \vec{u}} \circ \tilde K_b \big) \ket{0_L} = \bra{1_L} \big( \tilde K_a^{\dagger} \circ K_{\vec{a}, \vec{v}}^{\dagger} \big) \big( K_{\vec{b}, \vec{u}} \circ \tilde K_b \big) \ket{1_L}$ \cite{Knill_Laflamme_1996} yields condition (C4):
\begin{equation}
    \sum_k \frac{\binom{N-s-t+j}{k}}{\sqrt{\binom{N}{k+a'} \binom{N}{k+b'}}} \left( \alpha^*_{k+a'} \alpha_{k+b'} - \beta^*_{k+a'} \beta_{k+b'} \right) = 0.
\end{equation}